\newcommand{\dist}{\operatorname{dist}}
\title{Hyperplane Distance Depth\thanks{This work is funded in part by the Natural Sciences and Engineering Research Council of Canada (NSERC).}

\author{
        Stephane Durocher\thanks{Department of Computer Science, University of Manitoba, {\tt \{stephane.durocher,amirhossein.mashghdoust\}@umanitoba.ca}}
        \and
        Amirhossein Mashghdoust\footnotemark[2]}
}
\begin{document}
\thispagestyle{empty}
\maketitle

\begin{abstract}
Depth measures quantify central tendency in the analysis of statistical and geometric data. Selecting a depth measure that is simple and efficiently computable is often important, e.g., when calculating depth for multiple query points or when applied to large sets of data. In this work, we introduce \emph{Hyperplane Distance Depth (HDD)}, which measures the centrality of a query point $q$ relative to a given set $P$ of $n$ points in $\mathbb{R}^d$, defined as the sum of the distances from $q$ to all $\binom{n}{d}$ hyperplanes determined by points in $P$. We present algorithms for calculating the HDD of an arbitrary query point $q$ relative to $P$ in $O(d \log n)$ time after preprocessing $P$, and for finding a median point of $P$ in $O(d n^{d^2} \log n)$ time. 
We study various properties of hyperplane distance depth, and show that it is convex, symmetric, and vanishing at infinity.
\end{abstract}

\section{Introduction}
Depth measures describe central tendency in statistical and geometric data. A median of a set of univariate data is a point that partitions the set into two halves of equal cardinality, with smaller values in one part, and larger values in the other. Various definitions of medians exist in higher dimensions (multivariate data), seeking to generalize the one-dimensional notion of median (e.g., \cite{durocher2017}). For geometric data and sets of geometric objects, applications of median-finding include calculating a centroid, determining a balance point in physical objects, and defining cluster centers in facility location problems \cite{farahani2010multiple}.
A median is frequently used in statistics to describe the 
central tendency of a data set. It is particularly useful when dealing with skewed distributions or datasets that contain outliers. By using a  median, analysts can obtain a representative value that is less affected by extreme values and outliers \cite{murray1998cluster}.

In 1975, Tukey introduced the concept of data depth for evaluating centrality in bivariate data sets \cite{tukey1975}. The depth of a particular query point $q$ in relation to a given  set $P$ gauges the extent to which $q$ is situated within the overall distribution of $P$; i.e., when $q$'s depth is large, $q$ tends to be near the center of $P$.
Since the introduction of Tukey depth (also called half-space depth), many more depth functions have been proposed. 

Data depth functions should ideally satisfy specific properties, such as \emph{convexity}, \emph{stability} (small perturbations in the data do not result in large changes in depth values),
\emph{robustness} (depth is not heavily influenced by outliers or extreme values in the data), \emph{affine invariance} (the depth function remains consistent under linear transformations of the data, such as translation, scaling, and rotation), \emph{maximality at the center} (points closer to the geometric center of the data set have higher depth values), and \emph{vanishing at infinity} (depth values approach zero as a query point moves away from the data set) \cite{zuo2000general}.

\section{Related Work}\label{sec:relatedWork}
Tukey \cite{tukey1975} first introduced the concept of location depth. In $\mathbb{R}^2$, the Tukey depth of a point $q\in{R}^2$ relative to a set $P$ of $n$ points in $\mathbb{R}^2$ is defined as the smallest number of points of $P$ on one side of a line passing through $q$. This concept can also be generalized to higher dimensions.

\newtheorem{definition}{Definition}
\begin{definition}[Tukey Depth \cite{tukey1975}]
The \emph{Tukey depth} of a point $q \in \mathbb{R}^d$ relative to a set $P$ of points in $\mathbb{R}^d$, is the minimum number of points of $P$ in any closed half-space that contains $q$.
\end{definition}
In univariate space, e.g., in $\mathbb{R}$, the Tukey depth of $q$ is determined by considering the minimum of the count of points $p_i \in P$ where $p_i<q$, and the count of points $p_i \in P$ where $p_i>q$.\\
A \emph{Tukey median} of a set $P$ in $\mathbb{R}^d$ corresponds to a point (or points) with maximum Tukey depth among all points in $\mathbb{R}^d$.

Since Tukey's introduction of Tukey depth, several other important depth functions have been defined to measure the centrality of $q$ relative to $P$.

\begin{definition}[Mahalanobis Depth \cite{mahalanobis2018generalized}]
The \emph{Mahalanobis dept}h of a point $q \in \mathbb{R}^d$ relative to a set ${P}$ in $\mathbb{R}^d$ is defined as ${[1+{(q-\bar{q})}^T{P_d}^{-1}(q-\bar{q})]}^{-1}$, where $\bar{q}$ and $P_d$ are the mean vector and dispersion matrix of $P$.
\end{definition}
This function lacks robustness, as it relies on non-robust measures like the mean and the dispersion matrix. Another possible disadvantage of Mahalanobis depth is its reliance on the existence of second moments \cite{mahalanobis2018generalized}.

\begin{definition}[Convex Hull Peeling Depth \cite{barnett1976ordering}]
The \emph{convex hull peeling depth} of a point $q \in \mathbb{R}^d$ relative to a set ${P}$ in $\mathbb{R}^d$ is the level of the convex layer to which $q$ belongs.
\end{definition}
A convex layer is established by recursively removing points on the convex hull boundary of $P$ until $q$ is outside the hull. Begin by constructing the convex hull of $P$. Points of $P$ on the boundary of the hull constitute the initial convex layer and are removed. Then, form the convex hull anew with the remaining points of $P$. The points along this new hull's boundary constitute the second convex layer. This iterative process continues, generating a sequence of nested convex layers. The deeper a query point $q$ resides within $P$, the deeper the layer it belongs to.
However, the method of convex hull peeling depth possesses certain drawbacks. It fails to exhibit robustness in the presence of outliers or noise. Additionally, it's unfeasible to associate this measure with a theoretical distribution.

\begin{definition}[Oja Depth \cite{oja1983descriptive}]
The \emph{Oja depth} of a point $q \in \mathbb{R}^d$ relative to a set ${P}$ in $\mathbb{R}^d$ is defined as the sum of the volumes of every closed simplex having one vertex at $q$ and its remaining vertices at any points of $P$.
\end{definition}
In $\mathbb{R}^2$, the Oja depth of a point $q$ is the sum of the areas of all triangles formed by the vertices $q$,$p_i$, and $p_j$, where  $\{p_i, p_j\} \subseteq P$.

\begin{definition}[Simplicial Depth \cite{liu1990notion}]
The \emph{simplicial Depth} of a point $q \in \mathbb{R}^d$ relative to a set ${P}$ in $\mathbb{R}^d$ is defined as the number of closed simplices containing $q$ and having $d+1$ vertices in $P$.
\end{definition}
The simplicial depth of a point $q \in \mathbb{R}^2$ is the number of triangles with vertices in $P$ and containing $q$. This is a common measure of data depth.

\begin{definition}[$L_1$ Depth \cite{vardi2000multivariate}]
The \emph{$L_1$ depth} of a point $q \in \mathbb{R}^d$ relative to a set ${P}$ in $\mathbb{R}^d$ is defined as $\sum_{p_i\in P} {||p_i-q||}_1$.
\end{definition}
The $L_1$ Median is the point that minimizes the sum of the absolute distances (also known as the $L_1$ norm or Manhattan distance) to all other points in $P$.
The key advantage of the $L_1$ Median is its robustness to outliers. It is less sensitive to extreme values in the dataset compared to the $L_2$ Median, which minimizes the sum of squared distances. As a result, the $L_1$ Median can provide a more accurate estimate of central tendency in datasets with outliers or heavy-tailed distributions.
The $L_1$ Median is used in various fields, including finance, image processing, and robust statistics, whenever there is a need for a robust estimate of the central location of a dataset that may contain atypical values.

\begin{definition}[$L_2$ Depth \cite{zuo2000general}]
The \emph{$L_2$ depth} (mean) of a point $q \in \mathbb{R}^d$ relative to a set ${P}$ in $\mathbb{R}^d$ is defined as $\sum_{p_i\in P} {||p_i-q||}^2$.
\end{definition}
The $L_2$ Median is the point that minimizes the sum of the squared Euclidean distances.
The mean is a widely used measure of central tendency in statistics and data analysis. The mean is not robust to outliers; a single outlier can pull the mean arbitrarily far.

\begin{definition}[Fermat-Weber Depth \cite{durier1985geometrical}]
The \emph{Fermat-Weber depth} (Geometric depth) of a point $q \in \mathbb{R}^d$ relative to a set ${P}$ in $\mathbb{R}^d$ is defined as $\sum_{p_i\in P} {||p_i-q||}$.
\end{definition}
A deepest point (median) with respect to Fermat-Weber depth cannot be calculated exactly in general when $d \geq 2$ and $|P| \geq 5$ \cite{bajaj1988}.

There is no single depth function that universally outperforms all others. The choice of a particular depth function often depends on its suitability for a specific dataset or its ease of computation. Nevertheless, there are several desirable properties that all data depth functions should ideally possess. In Section~\ref{sec:results}, we introduce a new depth measure, and we examine which of these properties it satisfies.

\section{Results}\label{sec:results}
In this section, we will introduce the Hyperplane Distance Depth (HDD) measure and study its properties.

\subsection{Defintion}
 
\begin{definition}[Hyperplane distance depth]\label{Hyperplane distance depth}
The Hyperplane distance depth (HDD) of a point $q \in \mathbb{R}^d$ relative to a set ${P}$ in $\mathbb{R}^d$ is defined as
\begin{equation}\label{hddDefinition}
    D_P(q)=\sum_{h_i \in H_P} \dist(q,h_i),
\end{equation} 
where $H_P$ is the set of all $\binom{n}{d}$ $(d-1)$-dimensional hyperplanes determined by points in $P$, and $\dist(q,h_i)$ denotes the Euclidean ($L_2$) distance from the point $q$ to the hyperplane $h_i$.
\end{definition}

Both Fermat-Weber depth and hyperplane distance depth are defined as sums of Euclidean ($L_2$) distances. 
Unlike Fermat-Weber depth, for which the location of a median cannot be computed exactly in general when $d \geq 2$ \cite{bajaj1988}, as we show in Section~\ref{Algorithms}, the location of a HDD median can be computed exactly. 

\subsection{Properties}\label{sec:properties}
\begin{theorem} \label{1D}
In $\mathbb{R}$, the HDD median relative to the set ${P}$ coincides with the usual univariate definition of median.
\end{theorem}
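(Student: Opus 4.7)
The plan is to unpack Definition~\ref{Hyperplane distance depth} in the one-dimensional case and reduce the statement to a classical fact about the $L_1$ objective on $\mathbb{R}$. When $d=1$, a ``hyperplane'' is a $(d-1)=0$-dimensional affine subspace of $\mathbb{R}$, i.e., a single point, and each such hyperplane is determined by exactly $d=1$ point of $P$. Thus $|H_P|=\binom{n}{1}=n$, the set $H_P$ coincides with $P$ itself, and the Euclidean distance from a query point $q$ to the hyperplane associated with $p_i$ is simply $|q-p_i|$. Substituting into Equation~\eqref{hddDefinition} yields
\[
D_P(q)=\sum_{p_i\in P}|q-p_i|,
\]
which is exactly the Fermat-Weber/$L_1$ objective on the real line.

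Next I would invoke the standard characterization of minimizers of $f(q)=\sum_{i=1}^n |q-p_i|$. For concreteness I would sort the points as $p_{(1)}\le p_{(2)}\le \cdots \le p_{(n)}$ and observe that $f$ is continuous, convex, and piecewise linear, with slope $\#\{i:p_i<q\}-\#\{i:p_i>q\}$ at every point $q$ not in $P$. Because this slope is nondecreasing in $q$ and changes sign exactly at the middle order statistic(s), the minimum is attained at $p_{((n+1)/2)}$ when $n$ is odd, and on the entire interval $[p_{(n/2)},\,p_{(n/2+1)}]$ when $n$ is even. Either way the set of minimizers is precisely the usual univariate median of $P$.

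The main (and only) obstacle is really the interpretation step: confirming that in dimension one ``the hyperplanes determined by points of $P$'' collapse to the points of $P$ themselves, and so the HDD sum collapses to the familiar $L_1$ sum. Once that identification is made, the rest is a textbook convexity/slope-sign argument that I would state in one or two lines, concluding that any minimizer of $D_P$ is a univariate median of $P$, and conversely.
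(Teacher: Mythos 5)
Your proof is correct and follows essentially the same route as the paper: the key step in both is observing that in $\mathbb{R}$ the hyperplanes determined by $P$ are the points of $P$ themselves, so $H_P=P$ and $D_P(q)=\sum_{p_i\in P}|q-p_i|$. You additionally spell out the standard slope/convexity argument showing the minimizer of this $L_1$ sum is the univariate median, a fact the paper's one-line proof leaves implicit.
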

\begin{proof}
By Definition~\ref{Hyperplane distance depth}, the median is a point that minimizes the sum of the distances to all possible points passing through each point in $P$. Therefore, $H_P=P$. Consequently, the HDD median is equivalent to the usual definition of median in a one-dimensional space.
\end{proof}

\begin{theorem} \label{convexity}
The HDD function $D_P(q)$ relative to the set ${P}$ is convex over $q \in \mathbb{R}^d$.
\end{theorem}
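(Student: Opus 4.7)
The plan is to observe that $D_P(q)$ is a finite sum of functions, each of the form $q \mapsto \dist(q, h_i)$, so by the standard fact that a nonnegative sum of convex functions is convex, it suffices to show that for every fixed hyperplane $h_i$, the map $q \mapsto \dist(q, h_i)$ is convex on $\mathbb{R}^d$.

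To establish this per-hyperplane convexity, I would write each hyperplane $h_i \in H_P$ in the normal form $\{x \in \mathbb{R}^d : n_i^\top x = b_i\}$ where $n_i \in \mathbb{R}^d$ is a unit normal vector and $b_i \in \mathbb{R}$. Then the Euclidean distance has the closed form
\begin{equation*}
\dist(q, h_i) = \lvert n_i^\top q - b_i \rvert.
\end{equation*}
The expression inside the absolute value is an affine function of $q$, and the absolute value of an affine function is well-known to be convex: explicitly, for any $q_1, q_2 \in \mathbb{R}^d$ and $\lambda \in [0,1]$, the triangle inequality for $\lvert \cdot \rvert$ combined with the linearity of $n_i^\top x - b_i$ in $x$ gives
\begin{equation*}
\bigl\lvert n_i^\top(\lambda q_1 + (1-\lambda) q_2) - b_i \bigr\rvert \le \lambda \lvert n_i^\top q_1 - b_i\rvert + (1-\lambda)\lvert n_i^\top q_2 - b_i\rvert.
\end{equation*}

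Summing this inequality over all $h_i \in H_P$ yields $D_P(\lambda q_1 + (1-\lambda) q_2) \le \lambda D_P(q_1) + (1-\lambda) D_P(q_2)$, which is the definition of convexity of $D_P$. No step here is a real obstacle; the only mild subtlety is making sure that the set of hyperplanes $H_P$ is well-defined even when some $d$-tuples of points of $P$ fail to determine a unique hyperplane (e.g., degenerate collinear/coplanar configurations), but in either case each term is either an affine-absolute-value function or can be taken as $0$, and both are convex, so the argument is unaffected.
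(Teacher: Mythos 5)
Your proposal is correct and follows essentially the same route as the paper: decompose $D_P$ as a sum of per-hyperplane distance functions and invoke the fact that a nonnegative sum of convex functions is convex. The only difference is that you explicitly verify the convexity of $q \mapsto \dist(q,h_i)$ via the affine-plus-absolute-value form, a step the paper simply asserts.
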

\begin{proof}
The distance function ${d_{h_i}(q)}$ from a query point $q$ to the hyperplane $h_i$ is convex. Any non-negative linear combination of convex functions is convex. Therefore, the HDD function  $\sum_{h_i \in H_P} {d_{h_i}(q)}=D_P(q)$ is convex over $q$.
\end{proof}
\begin{theorem} \label{medianOnIntersections}
The HDD median point relative to the set ${P}$ of points in $\mathbb{R}^d$ is always on one of the intersection points between $d$ hyperplanes in $H_P$.
\end{theorem}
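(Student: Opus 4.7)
The plan is to exploit the piecewise linear structure of $D_P$ and reduce the problem to the standard fact that an affine function on a convex polytope attains its minimum at a vertex.

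First I would rewrite each summand. If hyperplane $h_i \in H_P$ is given by $\mathbf{n}_i \cdot q = c_i$ with unit normal $\mathbf{n}_i$, then $\dist(q,h_i) = |\mathbf{n}_i \cdot q - c_i|$, the absolute value of an affine function of $q$. Consequently
\[
D_P(q) = \sum_{h_i \in H_P} |\mathbf{n}_i \cdot q - c_i|
\]
is a sum of absolute values of affine functions, hence a piecewise linear (and, by Theorem~\ref{convexity}, convex) function on $\mathbb{R}^d$. The pieces are precisely the closed cells of the hyperplane arrangement $\mathcal{A}(H_P)$: inside any open full-dimensional cell every sign $\operatorname{sgn}(\mathbf{n}_i \cdot q - c_i)$ is fixed, so $D_P$ restricted to the closure of that cell is an affine function.

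Next I would invoke the well-known fact that an affine function on a (possibly unbounded) convex polyhedron attains its minimum at a vertex whenever that minimum is finite. Existence of a finite minimizer of $D_P$ follows from convexity together with the fact that $D_P(q) \to \infty$ as $\|q\| \to \infty$ (each term grows linearly and at least one hyperplane has normal with nonzero component in any given direction, provided $P$ affinely spans $\mathbb{R}^d$; the degenerate case is handled by restricting to the affine hull of $P$). Fix any full-dimensional cell $C$ containing a minimizer $q^\ast$. Since $D_P\!\mid_C$ is affine and $q^\ast$ minimizes it, the minimum is also attained at some vertex $v$ of $C$. Each vertex of the arrangement $\mathcal{A}(H_P)$ lies in the intersection of at least $d$ hyperplanes of $H_P$, so $v$ is an intersection point of $d$ hyperplanes in $H_P$, giving the claim.

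The main subtlety will be the non-generic situation in which $D_P$ is constant along some direction inside $C$, i.e.\ the set of minimizers is a higher-dimensional face $F$ of the arrangement. I would handle this by a straightforward dimension-descent: restrict $D_P$ to the affine hull of $F$, observe that it is still piecewise linear and convex, and repeat the cell-vertex argument on the induced subdivision of $F$. Iterating at most $d$ times produces a $0$-dimensional minimizer, i.e.\ a vertex of $\mathcal{A}(H_P)$, which by definition is an intersection point of $d$ hyperplanes in $H_P$. Thus, even when the median is not unique, \emph{some} median lies at such an intersection point, which is what the theorem asserts.
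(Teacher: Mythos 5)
Your proof is correct, but it takes a genuinely different route from the paper's. The paper argues via differentiation: inside any full-dimensional cell of the arrangement the gradient of $D_P$ is a constant vector (a signed sum of unit normals), it asserts that this constant cannot vanish, concludes the minimizer must lie on some hyperplane $h_j$, and then repeats the argument restricted to $h_j$, descending one dimension at a time until the minimizer is pinned to an intersection of $d$ hyperplanes. You instead make a one-shot argument: $D_P$ is a sum of absolute values of affine functions, hence affine on each closed cell of the arrangement, and an affine function on a pointed polyhedron that attains a finite minimum attains it at a vertex, which is an intersection of $d$ hyperplanes of $H_P$; you also supply a coercivity argument ($D_P \to \infty$ along every ray when $P$ affinely spans $\mathbb{R}^d$) to guarantee that a minimizer exists at all, a point the paper never addresses. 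Your version is arguably tighter: the paper's claim that the constant gradient ``cannot be equal to $0$ in general'' is not justified (for symmetric configurations the signed unit normals can cancel, so that an entire cell consists of minimizers), whereas your vertex argument handles exactly this case and shows that \emph{some} median still lies at a vertex of the arrangement; for the same reason your closing dimension-descent paragraph is redundant, since the linear-programming fact already covers non-unique minimizers. The one hypothesis worth making explicit is that the cells are pointed (contain no line), which holds because the normals of the hyperplanes in $H_P$ span $\mathbb{R}^d$ whenever $P$ affinely spans $\mathbb{R}^d$ --- the same non-degeneracy assumption you already use for coercivity and that the paper assumes implicitly.
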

\begin{proof}
The distance from the point $q \in \mathbb{R}^d$ to a hyperplane $h_i$ is equal to ${d_{h_i}(q)}=\frac{|{w_i}.q+{b_i}|}{\|{w_i}\|}$ where $w_i$ and $b_i$ are the hyperplane's normal vector and the offset respectively. Therefore, the HDD of the point $q$ is equal to
\begin{equation} \label{eq1}
D_P(q)=\sum_{h_i \in H_P} {d_{h_i}(q)}=\sum_{h_i \in H_P} \frac{|{w_i}.q+{b_i}|}{\|{w_i}\|}
\end{equation}
Depending on the position of $q$ with respect to $h_i$, ${d_{h_i}(q)}=\frac{|{w_i}.q+{b_i}|}{\|{w_i}\|}$ can be equal to $+\frac{{w_i}.q+{b_i}}{\|{w_i}\|}$ (above the hyperplane) ,$-\frac{{w_i}.q+{b_i}}{\|{w_i}\|}$ (below the hyperplane), or $0$ (on the hyperplane). Therefore, for any point $q$ we have
\begin{equation} \label{eq2}
\begin{split}
    & \sum_{h_i \in H_P} {d_{h_i}(q)}=\sum_{h_i \in H_P} g_{i,q} \frac{{w_i}.q+{b_i}}{\|{w_i}\|}\\
    &g_{i,q}= 
\begin{cases}
   +1,& \text{if $q$ is above $h_i$} \\
   -1,& \text{if $q$ is below $h_i$}\\
    0,& \text{if $q$ is on $h_i$}\\
\end{cases}
\end{split}
\end{equation}
It is worth noting that the derivative of the equation \eqref{eq2} exists if $q$ is not on any of the hyperplanes in $H_P$ ($g_{i,q}\neq0$). Now to find the HDD median with the minimum HDD measure, we should compute the derivative with respect to $q$ and see where it will be 
equal to $0$. For any query point $q$ inside a region bounded by some $H_P$ hyperplanes and not on any $H_P$ hyperplanes (Figure~\ref{fig:4PointsExample}) we have 
\begin{equation} \label{eq3}
\begin{split}
    \frac{d}{dq} D_P(q)& = \frac{d}{dq}\sum_{h_i \in H_P} g_{i,q} \frac{{w_i}.q+{b_i}}{\|{w_i}\|}\\& = \sum_{h_i \in H_P} g_{i,q} \frac{{w_i}}{\|{w_i}\|}
\end{split}
\end{equation}
\begin{figure}[h] 
\centering
\includegraphics[width=0.5\textwidth]{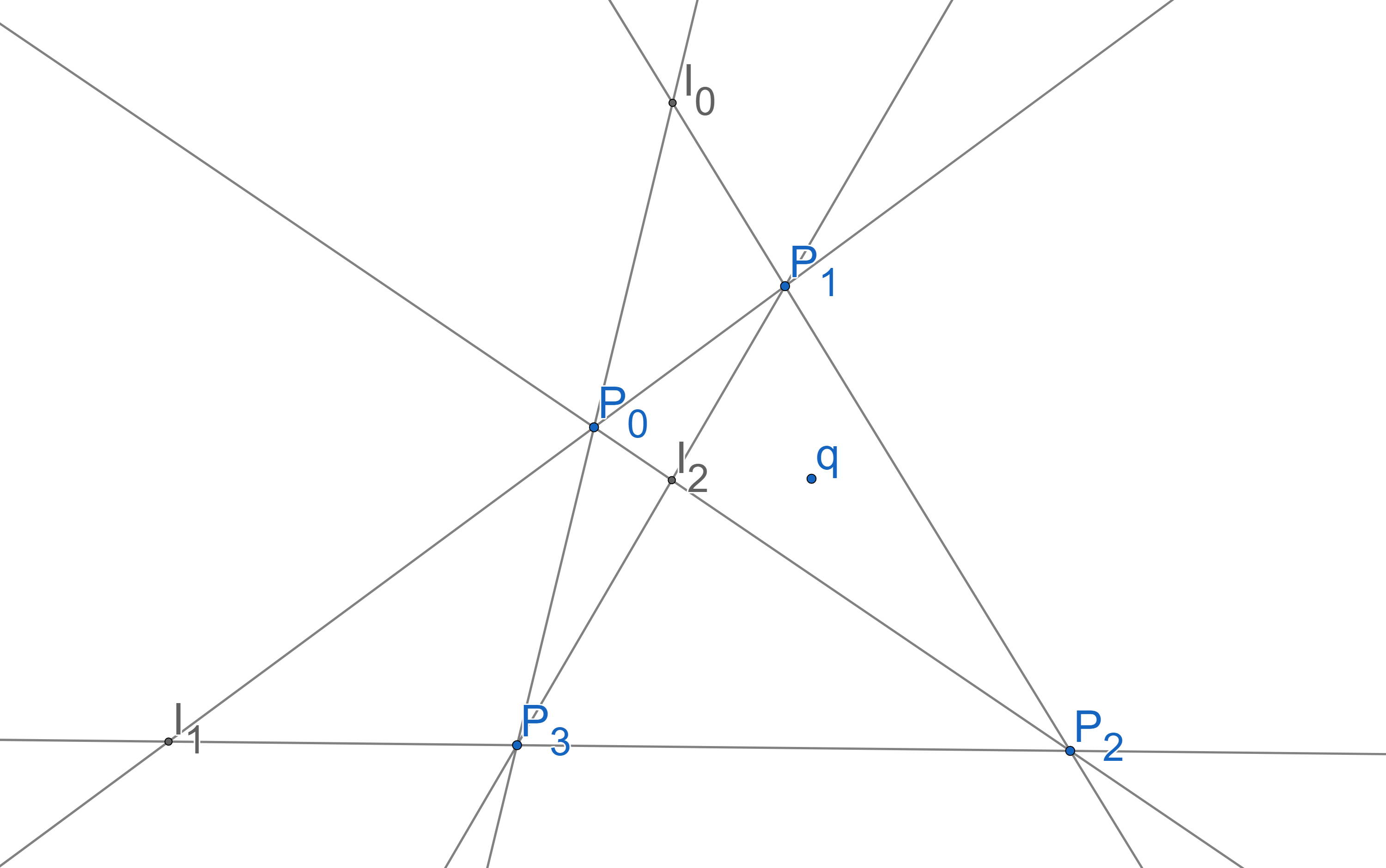}
\caption{Example of HDD in two dimensions: $P=\{P_0,P_1,P_2,P_3\}$ is the set of input points, $I=\{I_0,I_1,I_2\}$ is the set of intersection points, and $q$ is the query point.}
\label{fig:4PointsExample}
\end{figure}
Equation~\eqref{eq3} above cannot be equal to $0$ in general since there are no variables \eqref{eq3}. This means the assumption we made about the query point not being on the hyperplanes in $H_P$ was incorrect. Therefore, we can say the median is surely on one of the hyperplanes. If $q$ is on $h_j$, ${w_j}.q+{b_j}$ will be equal to $0$. Therefore, we can say
\begin{align}
    \frac{d}{dq} D_P(q)& =\frac{d}{dq}\sum_{h_i \in H_P-\{h_j\}} g_{i,q} \frac{{w_i}.q+{b_i}}{\|{w_i}\|} \nonumber \\
    & =\sum_{h_i \in H_P-\{h_j\}} g_{i,q} \frac{{w_i}}{\|{w_i}\|} . \label{eq5}
\end{align}
Using the same logic we can conclude that the median point should be on another hyperplane in addition to $h_j$. We can repeat these steps $d$ times and after that, it will be proved that the median should be on the intersection point of $d$ hyperplane (that will be a single point), thus the median will be on one of the intersection points.
\end{proof}
\begin{theorem} \label{MedianInCH}
The HDD median point relative to the set ${P}$ in $\mathbb{R}^d$ is always in the convex hull of the input points $P$.
\end{theorem}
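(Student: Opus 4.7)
The plan is to argue by contradiction. Suppose the HDD median $q^*$ lies strictly outside $\mathrm{conv}(P)$; then the nearest point $q' \in \mathrm{conv}(P)$ to $q^*$ lies on the boundary of $\mathrm{conv}(P)$, and $n = (q^* - q')/\|q^* - q'\|$ is the outward unit normal to a supporting hyperplane $H$ of $\mathrm{conv}(P)$ at $q'$. Write $q^* = q' + L n$ with $L > 0$, so that $\mathrm{conv}(P) \subseteq \overline{H^-}$. By Theorem~\ref{convexity}, $D_P$ is convex, so the minimality of $q^*$ is equivalent to $0 \in \partial D_P(q^*)$; the aim is to rule this out.

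For each $h_i \in H_P$ I would orient its unit normal $\hat{w}_i$ so that $\delta_i := \hat{w}_i \cdot n \ge 0$, and let $s_i(q) = \hat{w}_i \cdot q - c_i$, so that $\dist(q,h_i) = |s_i(q)|$. Standard subgradient calculus gives
\[
\partial D_P(q^*) \;=\; \sum_{i\, :\, s_i(q^*) \ne 0} \operatorname{sign}(s_i(q^*))\,\hat{w}_i \;+\; \sum_{i\, :\, s_i(q^*) = 0} [-1,1]\,\hat{w}_i,
\]
where the second sum is a Minkowski sum of line segments. Taking inner products with $n$ and using $\delta_i \ge 0$, the condition $0 \in \partial D_P(q^*)$ forces
\[
\Bigl|\sum_{i\, :\, s_i(q^*) > 0} \delta_i \;-\; \sum_{i\, :\, s_i(q^*) < 0} \delta_i\Bigr| \;\le\; \sum_{i\, :\, s_i(q^*) = 0} \delta_i,
\]
so violating this inequality, with the positive terms strictly dominating, completes the contradiction.

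The key geometric input is that every $h_i \in H_P$ has all $d$ of its defining points in $P \subseteq \overline{H^-}$. I would choose $H$ to be the affine hull of a facet $F$ of $\mathrm{conv}(P)$ whose halfspace strictly separates $q^*$; such a facet exists because $q^* \notin \mathrm{conv}(P)$. The $d$ vertices of $F$ lie in $P$, so $H$ itself belongs to $H_P$ with multiplicity $\binom{|P\cap H|}{d} \ge 1$, and each copy satisfies $\hat{w}_i = n$, $\delta_i = 1$, and $s_i(q^*) = L > 0$, contributing at least $+1$ to the first sum above. More generally, for any $h_i$ whose normal $\hat{w}_i$ is close to $n$, the constraint that all defining points lie in $\overline{H^-}$ forces $s_i(q^*) > 0$, contributing further positive weight $\delta_i$.

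The hard part will be controlling the hyperplanes $h_i$ for which $s_i(q^*) < 0$: these are the ones that cross $H$ and happen to pass close to $q^*$ on its own side. For such $h_i$ the term-by-term comparison $\dist(q',h_i) \le \dist(q^*,h_i)$ can fail (a steeply tilted $h_i$ through two points of $P$ far apart horizontally already furnishes such an example), so the argument must be genuinely aggregate rather than local. The expected reason for success is that a bad hyperplane has $\delta_i$ small exactly when $\hat{w}_i$ is nearly orthogonal to $n$, while the guaranteed facet contribution with $\delta_i = 1$ is immune to this suppression; combined with the constraint $P \subseteq \overline{H^-}$ restricting how many and how strongly negative the bad contributions can be, the positive side should strictly outweigh the negative one.
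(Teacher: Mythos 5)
Your write-up is a plan rather than a proof, and the gap you flag yourself (controlling the hyperplanes with $s_i(q^*)<0$) is not a technicality: the mechanism you expect to close it is false. The only inputs your argument uses at the hypothetical minimizer are (i) the facet hyperplane contributes $\delta_i=1$ on the positive side and (ii) all defining points of every $h_i$ lie in $\overline{H^-}$, and these do not bound the negative-side mass $\sum_{s_i<0}\delta_i$. Concretely, in $\mathbb{R}^2$ take $P_1=(-1,0)$, $P_2=(1,0)$, and $n-2$ points clustered on a tiny segment of slope $1$ near $(0,-1)$, and consider the exterior point $q^*=(R,\epsilon)$ with $R$ large and $\epsilon>0$ small, which is strictly separated from $\operatorname{conv}(P)$ by the facet line $y=0$ (normal $n=(0,1)$). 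The roughly $\binom{n-2}{2}$ lines through pairs of cluster points (and through $P_2$ and cluster points) have slope near $1$, so at $x=R$ they pass far above $q^*$: each has $\delta_i\approx 1/\sqrt{2}$ and $s_i(q^*)<0$, giving $\sum_{s_i<0}\delta_i\approx\binom{n-2}{2}/\sqrt{2}$, which swamps the positive side $1+(n-2)/\sqrt{2}$ once $n\geq 6$. So at points satisfying all of your stated hypotheses the inequality you need (positive terms strictly dominating along $n$) simply fails; projecting the optimality condition onto the single direction $n$ cannot yield the contradiction, and any repair must genuinely use minimality of $q^*$ in other directions, i.e.\ more of the subdifferential than your argument extracts. (There are also smaller issues: you first take $H$ to be the supporting hyperplane at the nearest point and then switch to a facet hyperplane, whose normal is generally different, and the facet-existence step needs $\operatorname{conv}(P)$ full-dimensional.)

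For contrast, the paper avoids any per-direction bookkeeping by decomposing the depth by input point: writing $D'_{p_i}(q)$ for the sum of distances to the hyperplanes of $H_P$ through $p_i$, each $D'_{p_i}$ is convex and attains its minimum value $0$ at $p_i$, every hyperplane is counted $d$ times in $\sum_i D'_{p_i}$, and the paper argues that at any exterior point the negative gradient of each summand, hence of $D_P$, points toward the hull, so depth strictly decreases by moving toward $\operatorname{conv}(P)$. Whether or not you find that gradient claim fully rigorous as stated, it is a different and more global decomposition than your single-facet projection, and it is the step your proposal is missing.
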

\begin{proof}
Let $D'_{p_i}(q)$ be the sum of the distances to all the hyperplanes in $H_P$ passing through the point $p_i$. The minimum of this convex function is always on the point $p_i$ where the HDD is equal to $0$. On the other hand, since each hyperplane includes $d$ input points from $P$, we have $D_P(q)=d\sum_{p_i \in P}{D'_{p_i}(q)}$.\\
Now consider a point $q_o$ outside of the convex hull. by computing the gradient of $D_P(q_o)$, we will show that by moving $q_o$ closer to the convex hull, the HDD gets smaller. Using the equation above we have $-\nabla H_{P}(q_o)=-d\sum_{p_i \in P}\nabla{D'_{p_i}(q_o)}$. Since the minimum of the function $D'_{p_i}(q)$ is on $p_i$, $-\nabla D'_{p_i}(q)$ is a vector pointing to $p_i$ for $p_i \in P$. Therefore we can conclude that for every point $q_o$ outside of the convex hull, $-\nabla H_{P}(q_o)$ points to the convex hull that means by moving toward that direction, the HDD decreases. Therefore the HDD median is always in the convex hull of $P$.
\end{proof}
\begin{theorem} \label{simmetry}
The HDD median point relative to the set ${P}$ in $\mathbb{R}^d$ is always at the center of symmetry.
\end{theorem}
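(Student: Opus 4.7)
The plan is to exploit the symmetry of $P$ to show that $D_P$ is itself symmetric about the center, and then combine this with convexity (Theorem~\ref{convexity}) to conclude that the center of symmetry minimizes $D_P$.

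First I would set up the notation precisely: suppose $P$ has center of symmetry $c$, meaning that the point reflection $\sigma_c(x) = 2c - x$ satisfies $\sigma_c(P) = P$. The key structural observation is that $\sigma_c$ induces a bijection on $H_P$: any hyperplane $h \in H_P$ is spanned by $d$ points of $P$, and reflecting these $d$ points through $c$ yields $d$ points of $P$ that span $\sigma_c(h)$, which is again an element of $H_P$. Moreover, $\sigma_c$ is an isometry of $\mathbb{R}^d$, so it preserves point-to-hyperplane distance: $\dist(q, h) = \dist(\sigma_c(q), \sigma_c(h))$ for every $q$ and every hyperplane $h$.

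Next I would use these two facts to show $D_P(q) = D_P(\sigma_c(q))$ for every $q \in \mathbb{R}^d$. Indeed,
\begin{equation*}
D_P(\sigma_c(q)) = \sum_{h \in H_P} \dist(\sigma_c(q), h) = \sum_{h \in H_P} \dist(q, \sigma_c(h)) = \sum_{h' \in H_P} \dist(q, h') = D_P(q),
\end{equation*}
where the third equality uses that $h \mapsto \sigma_c(h)$ is a bijection on $H_P$.

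Finally, I would combine this symmetry with Theorem~\ref{convexity}. For any $q$, convexity gives
\begin{equation*}
D_P(c) = D_P\!\left(\tfrac{q + \sigma_c(q)}{2}\right) \le \tfrac{1}{2}\bigl(D_P(q) + D_P(\sigma_c(q))\bigr) = D_P(q),
\end{equation*}
so $c$ minimizes $D_P$ and is therefore an HDD median.

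The main obstacle is really the first step, verifying carefully that $\sigma_c$ maps $H_P$ bijectively to itself; once this is in place, the rest is a clean consequence of isometry and convexity. A minor subtlety to note is that the HDD median need not be unique in general, so the statement should be read as: the center of symmetry is always among the HDD medians.
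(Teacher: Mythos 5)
Your proposal is correct and follows essentially the same route as the paper: invariance of $D_P$ under the central reflection plus convexity applied along the segment joining a point and its reflection, whose midpoint is the center. The only differences are in execution — you verify the symmetry claim the paper labels ``trivial'' (via the bijection $h \mapsto \sigma_c(h)$ on $H_P$) and apply convexity at an arbitrary point $q$ rather than at an assumed median, which makes the argument slightly cleaner but not a different proof.
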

\begin{proof}
Let $p_M$ be the median of the $P$ s.t. $P$ is symmetric. If $p_M$ is not on the center of symmetry, consider ${p'}_M$, the reflection of $p_M$ across the center of symmetry. Because of the symmetry, it is trivial that the HDD measure of both points is equal. Since the median point has the minimum depth measure among the other points and the depth measure function is convex, all the points on the line segment $p_m{p'}_M$ should have a depth measure equal to the median. Therefore, the median is always at the center of symmetry.
\end{proof}
\begin{theorem} \label{vanishing}
The HDD measure relative to the set ${P}$ in $\mathbb{R}^d$ vanishes as we move the query point to infinity.
\end{theorem}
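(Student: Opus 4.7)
The plan is to show that $D_P(q)\to\infty$ whenever $\|q\|\to\infty$. Since HDD sums non-negative distances, smaller values indicate higher centrality, so the divergence of $D_P$ along every direction of escape is exactly the sense in which ``depth vanishes at infinity'' for distance-sum measures such as $L_1$, $L_2$, and Fermat--Weber depth.

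First I would parametrise $q=q_0+t\hat v$ with $t\to+\infty$ along a fixed unit direction $\hat v$, and reuse the formula $\dist(q,h_i)=|w_i\cdot q+b_i|/\|w_i\|$ from the proof of Theorem~\ref{medianOnIntersections}. Substituting gives
\begin{equation*}
\dist(q,h_i)=\frac{|(w_i\cdot q_0+b_i)+t\,(w_i\cdot\hat v)|}{\|w_i\|},
\end{equation*}
which grows linearly in $t$ whenever $w_i\cdot\hat v\neq 0$ and remains bounded otherwise. It therefore suffices to produce, for each unit direction $\hat v$, at least one hyperplane in $H_P$ whose normal is not orthogonal to $\hat v$; that single term alone forces $D_P(q)\to\infty$.

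For the combinatorial step I would assume $P$ affinely spans $\mathbb R^d$ (a natural non-degeneracy hypothesis for a $d$-dimensional depth measure) and pick $d+1$ affinely independent points of $P$. They determine a non-degenerate $d$-simplex whose $d+1$ facets all belong to $H_P$. A standard fact in convex geometry says that the outer facet-normals of a non-degenerate $d$-simplex span $\mathbb R^d$, and so they cannot all be orthogonal to any single direction $\hat v$. Hence along every direction of escape at least one facet contributes an unbounded term, which dominates the bounded contributions from the remaining hyperplanes and proves the theorem.

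The main obstacle is exactly that non-degeneracy hypothesis: if $P$ lies in a proper affine subspace $A\subsetneq\mathbb R^d$, then translating $q$ in a direction perpendicular to $A$ shifts every hyperplane of $H_P$ rigidly and changes no distance, so the conclusion fails. To handle this cleanly I would either add the standing general-position assumption already implicit in Theorem~\ref{medianOnIntersections}, or restate the result relative to the affine hull of $P$.
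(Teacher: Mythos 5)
Your proposal is correct and follows the same basic route as the paper: along any escape direction, exhibit one hyperplane of $H_P$ whose distance to $q$ grows without bound, and note that all other terms are non-negative, so $D_P(q)\to\infty$ (which is the sense of ``vanishing'' the paper intends, since HDD is reversed). The difference is one of rigour rather than strategy: the paper simply asserts that ``it is straightforward that there exists a hyperplane that gets further from $q$,'' while you actually justify it — writing $q=q_0+t\hat v$, observing that $\dist(q,h_i)$ grows linearly in $t$ exactly when $w_i\cdot\hat v\neq 0$, and then using the fact that the facet normals of a non-degenerate $d$-simplex on $d+1$ affinely independent points of $P$ span $\mathbb{R}^d$, so no direction $\hat v$ can be orthogonal to all of them. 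You also correctly flag that the statement needs $P$ to affinely span $\mathbb{R}^d$, a caveat the paper never mentions; your proof is therefore the more complete one. One small slip in your side remark: when $P$ lies in a proper affine subspace $A$ (say a hyperplane), the directions along which the conclusion fails are those \emph{parallel} to $A$ (all hyperplanes of $H_P$ coincide with $A$, so such motion leaves every distance unchanged), whereas moving \emph{perpendicular} to $A$ still drives $D_P$ to infinity; also, translating $q$ does not ``shift the hyperplanes,'' it simply moves parallel to them. This does not affect your main argument.
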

\begin{proof}
As we move the query point $q$ to infinity, it is straightforward that there exists a hyperplane $h_i \in H_P$ that gets further from $q$. Since we can move $q$ arbitrarily far from $h_i$, and the distance from $q$ to the remaining hyperplanes in $H_P$ is non-negative, therefore HDD vanishes at infinity.
\end{proof}

Note that some measures of depth are defined such that deep points have high depth values and outliers have low depth values, whereas this property is reversed for other depth measures. HDD is of the latter type, with central points having a low sum of distances to hyperplanes in $H_P$, whereas this sum approaches infinity as $q$ moves away from $P$. Consequently, for HDD, ``vanishing at infinity'' means that depth approaches $\infty$ as opposed to 0.

\begin{theorem} \label{robustness}
The HDD measure relative to the set ${P}$ in $\mathbb{R}^d$ is not robust.
\end{theorem}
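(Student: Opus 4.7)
The plan is to prove non-robustness by exhibiting an explicit configuration in which a small fixed number of outliers forces the HDD median to move by an unbounded amount. Concretely, I would take $P \subset \mathbb{R}^2$ consisting of $n - 2$ ``inlier'' points placed on the $x$-axis near the origin (say at $(0,0), (1,0), \ldots, (n-3, 0)$) together with two ``outliers'' at $(R, 1)$ and $(R, 2)$, and analyze the behaviour of the HDD median as $R \to \infty$.

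The key computation partitions $D_P(q)$ according to hyperplane type. The $\binom{n-2}{2}$ inlier pairs all determine the single line $y = 0$ (counted with multiplicity); the $2(n-2)$ mixed pairs produce lines of slope $O(1/R)$; and the single outlier pair produces the vertical line $x = R$. A direct estimate then shows that at the outlier $(R, 1)$ one has $D_P(R, 1) = O(n^2)$, bounded independent of $R$, whereas $D_P(x_0, y_0) \ge |x_0 - R| = \Omega(R)$ for any query point with bounded $x$-coordinate, since the line $x = R$ alone contributes that much to the sum. Choosing $R$ larger than this $R$-independent constant therefore forces the HDD median to sit on the line $x = R$ near the outlier pair, so as $R \to \infty$ the median is dragged to infinity even though only two of the $n$ points have been perturbed.

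The main obstacle is the extension to $\mathbb{R}^d$ for $d \ge 3$, where the mixed hyperplanes are harder to analyze cleanly. The natural analogue places $n - k$ inliers on a coordinate hyperplane and $k \ge d$ outliers on a parallel hyperplane at distance $R$, so that the $\binom{k}{d}$ outlier-only hyperplanes all coincide with the outlier flat and play the role of the line $x = R$ above. Careful accounting of the multiplicities, together with bounds on the distances to the tilted mixed hyperplanes, should again drive the minimizer of $D_P$ to the outlier flat, giving a breakdown fraction of at most $d/n$ that tends to $0$ as $n \to \infty$. This confirms that, unlike the Tukey depth or $L_1$ median, a vanishing fraction of outliers suffices to move the HDD median arbitrarily far, so HDD fails the informal robustness criterion used in the paper.
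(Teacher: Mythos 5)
Your two-dimensional construction is correct and establishes the theorem by the same strategy as the paper: a planar counterexample in which only two of the $n$ points are placed far away, forcing every HDD median to move arbitrarily far, so a vanishing fraction of contaminated points breaks the median. The execution differs in the details. The paper works with a specific figure, designates an intersection point $I_0$ (the intersection of the line $P_1P_2$ with the line through the collinear cluster $P_3,\dots,P_n$), upper-bounds $D_P(I_0)<(n-2)\,|P_1P_2|$, and lower-bounds $D_P(I_i)$ for every other candidate intersection point by its distance to the line $P_1P_2$. You instead upper-bound the depth at the outlier $(R,1)$ by $\binom{n-2}{2}+(n-2)=O(n^2)$, independent of $R$ (the inlier pairs all give $y=0$ at distance $1$, and every mixed line passes through one of the two outliers), and lower-bound the depth of any point with bounded abscissa by $|x_0-R|$ via the single line $x=R$. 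This yields that every minimizer has $x$-coordinate at least $R-O(n^2)$, which is cleaner and less figure-dependent than the paper's estimates; your statement that the median ``sits on the line $x=R$'' is slightly stronger than what you prove, but the weaker conclusion already suffices. Note also that the paper itself proves only the $d=2$ case, so your 2D argument matches the paper's level of generality.

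The one genuine problem is your sketched extension to $d\ge 3$: placing the $k\ge d$ outliers on a hyperplane \emph{parallel} to the inlier hyperplane at distance $R$ destroys the mechanism that makes the 2D example work. In the plane your outliers stay within distance $2$ of the inlier line, which is exactly why $D_P(R,1)$ is bounded independently of $R$; with a parallel offset of $R$, a query point near the outlier flat pays $R$ to each of the $\binom{n-k}{d}$ copies of the inlier hyperplane, i.e.\ $\Omega\bigl(\binom{n-k}{d}R\bigr)$, while a point in the inlier cluster pays only $\binom{k}{d}R+O(1)$ to the outlier flat; since $\binom{n-k}{d}\gg\binom{k}{d}$, the minimizer stays with the inliers and is not dragged to infinity. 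The correct analogue keeps the outliers within bounded distance of the inlier flat but far \emph{along} it, spanning a hyperplane transversal to it: e.g.\ $n-d$ inliers in general position in $\{x_d=0\}$ near the origin, and $d$ outliers spanning $\{x_1=R\}$ within distance $2$ of $(R,0,\dots,0)$. Then the depth at an outlier is at most $2\binom{n-d}{d}+O(n^{d-1})=O(n^d)$ independent of $R$, any point with bounded $x_1$ pays at least $|x_1-R|$ to the hyperplane $\{x_1=R\}$, and the 2D argument goes through verbatim with $O(n^d)$ in place of $O(n^2)$.
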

\begin{proof}
We will prove this fact using a counter-example in a 2-dimensional space (Figure~\ref{fig:Property_Robustness}). We can move the HDD median by changing the location of 2 points which means the HDD is not robust. The median is always on one of the intersection points and we can place the points in a way that $I_0$ is always the median (Figure~\ref{fig:Property_Robustness}). We will compute the depth measures for the points $I_0$ \eqref{eq5} and $I_i$ \eqref{eq6}, where $I_i$ is an arbitrary intersection point except $I_0$.
\begin{figure}[htp]
    \centering
    \includegraphics[width=8cm]{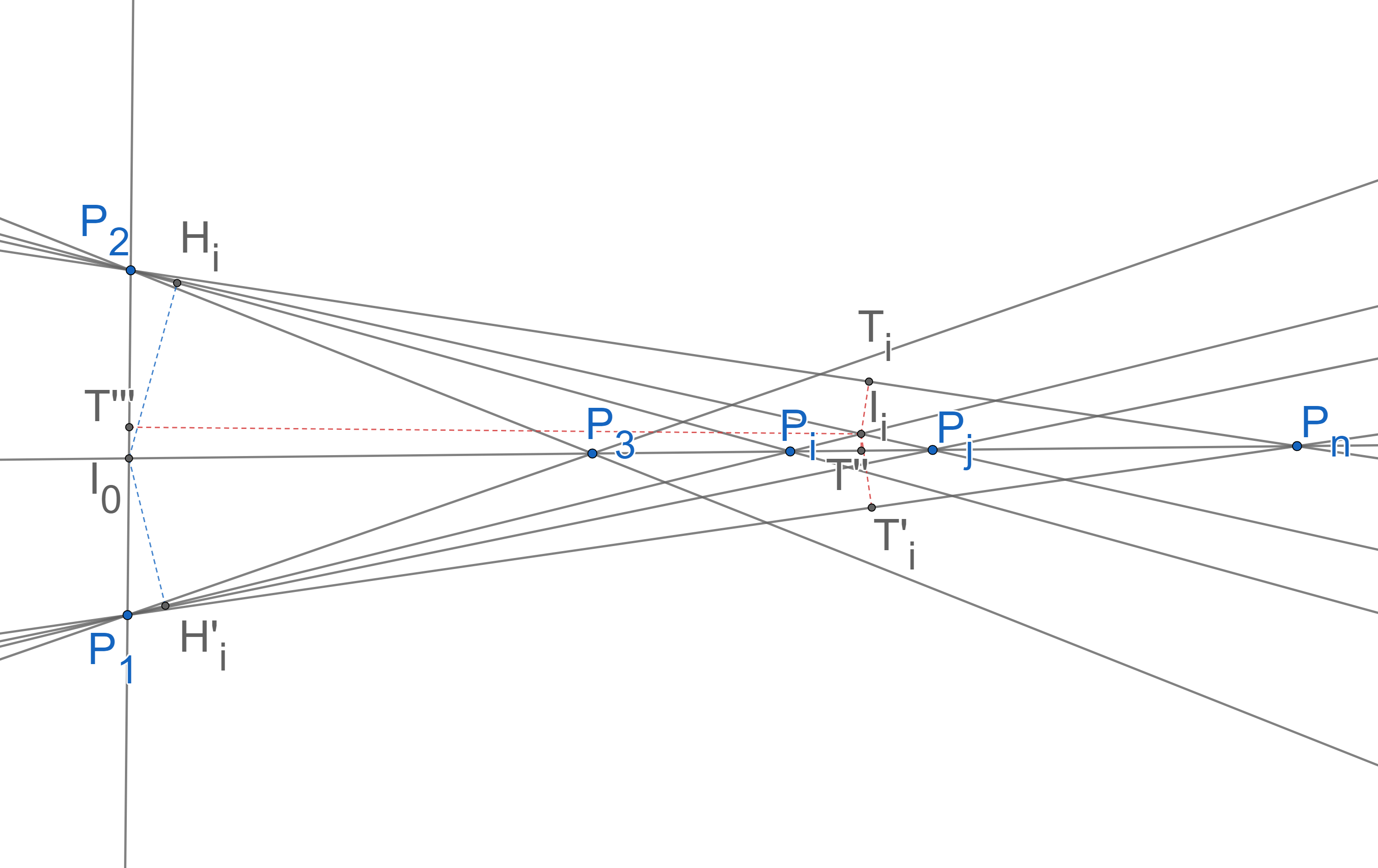}
    \caption{A counter-example that shows the HDD is not robust}
    \label{fig:Property_Robustness}
\end{figure}

\begin{align}
D_P(I_0) & = \sum_{i\in [3,n]}{ d_{l_{P_1P_i}}(I_0)}+\sum_{i\in [3,n]}{d_{l_{P_2P_i}}(I_0)} \label{eq10} \\
D_P(I_i)& =\sum_{i\in [3,n] }{d_{l_{P_1P_i}}(I_i)} + \sum_{i\in [3,n] }{d_{l_{P_2P_i}}(I_i)} \nonumber \\
& + \binom{n-2}{2} d_{l_{P_3P_n}}(I_i) + d_{l_{P_1P_2}}(I_i) \label{eq6}
\end{align}
Regardless of the $I_0$ position, we know that $I_0H_i<I_0P_2$ and $I_0H'_i<I_0P_1$. Therefore, we have (Equation~\eqref{eq10}):
\begin{equation} \label{eq7}
D_P(I_0)<(n-2)I_0P_2 + (n-2)I_0P_1=(n-2)P_1P_2 
\end{equation}
On the other hand, using Equation~\eqref{eq6} we have:
\begin{equation} \label{eq8}
D_P(I_i)>d_{l_{P_1P_2}}(I_i)
\end{equation}
Now by moving the points $P_1$ and $P_2$ far enough, let $d_{l_{P_1P_2}}(I_i)=(n-2)P_1P_2+m$, where $m$ is a positive number. Therefore, we have (inequality \ref{eq8}):
\begin{equation} \label{eq9}
D_P(I_i)>(n-2)P_1P_2+m
\end{equation}
Combining the inequality \ref{eq7} and \ref{eq9} we have $D_P(I_i)>D_P(I_0)$.

Consequently, $I_0$ is the median. By increasing $m$, the median $I_0$ gets as far as we want. This means by moving $P_1$ and $P_2$, we can move the median point as much as we want.
\end{proof}

\begin{definition}[$k$-stability \cite{durocher2009}]
A depth measure $D$ is \emph{$k$-stable} if for all points $q$ in $\mathbb{R}^d$, all sets $P$ in $\mathbb{R}^d$, all $\epsilon > 0$, and all functions $f_\epsilon:\mathbb{R}^d \to \mathbb{R}^d$ such that $\forall p$, $\dist(p,f(p)) \leq \epsilon$, 
\begin{equation}
    k \cdot |D(q,P) - D(f_\epsilon(q),f_\epsilon(P)| \leq \epsilon ,
\end{equation}
where $f_\epsilon(P) = \{ f_\epsilon(p) \mid p \in P\}$.
\end{definition}

That is, for any $\epsilon$-perturbation of $P$ and $q$, the depth of $q$ relative to $P$ changes by at most $k \epsilon$.

\begin{theorem}\label{thm:stability}
The HDD measure relative to the set ${P}$ in $\mathbb{R}^d$ is not $k$-stable for any constant $k$.
\end{theorem}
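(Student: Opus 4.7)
My plan is to construct, for each constant $k > 0$, an instance $(P, q, \epsilon, f_\epsilon)$ that violates the $k$-stability inequality. The geometric intuition is that when two points of $P$ lie within distance $\epsilon$ of each other, the hyperplane they help determine is extremely sensitive to $\epsilon$-perturbations: moving one of those points perpendicular to their common direction rotates the resulting hyperplane by a constant angle, which changes the distance from a far-away query point $q$ by an amount proportional to $q$'s distance $R$ from the cluster. Letting $R$ grow while keeping $\epsilon$ fixed forces the change in HDD to exceed $\epsilon/k$ for any constant $k$.

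Concretely, I would work in $\mathbb{R}^2$ with $n = 2$. Set $p_1 = (0,0)$ and $p_2 = (\epsilon, 0)$, so the single line in $H_P$ is the $x$-axis, and pick $q = (0, R)$ for $R > 0$ to be chosen later, giving $D_P(q) = R$. Define $f_\epsilon$ to fix both $p_1$ and $q$ while sending $p_2 \mapsto (\epsilon, \epsilon)$; each point moves by at most $\epsilon$, so this is a valid $\epsilon$-perturbation. The perturbed line is $y = x$, whose distance from $q$ is $R/\sqrt{2}$, hence
\begin{equation*}
|D_P(q) - D_{f_\epsilon(P)}(f_\epsilon(q))| = R\left(1 - \tfrac{1}{\sqrt{2}}\right).
\end{equation*}
Choosing $R > \epsilon/\bigl(k\,(1 - 1/\sqrt{2})\bigr)$ then contradicts the inequality $k\cdot |D_P(q) - D_{f_\epsilon(P)}(f_\epsilon(q))| \leq \epsilon$.

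To extend to arbitrary $d \geq 2$ with $n = d$, I would place $d$ points in general position within a cluster of diameter at most $\epsilon$ inside some coordinate hyperplane, put $q$ at distance $R$ along the normal direction, and perturb a single cluster point by $\epsilon$ in the normal direction. The lone hyperplane in $H_P$ rotates by a constant angle and the distance from $q$ again changes by $\Theta(R)$, so the same argument applies. The step I would expect to require the most care is verifying that $f_\epsilon$ is genuinely an $\epsilon$-perturbation \emph{and} that the perturbed $d$ points remain in general position (so that they still determine a unique hyperplane in $H_{f_\epsilon(P)}$); both are easy to arrange, but both need to be checked explicitly for the general-$d$ construction.
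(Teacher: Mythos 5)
Your proof is correct, but it takes a genuinely different route from the paper's. The paper works in $\mathbb{R}$, where $H_P=P$: it fixes $\epsilon=1$, takes $n=\lceil 1/k\rceil+1$ points, places $q$ to their left, and shifts all of $P$ one unit away from $q$ while fixing $q$; the depth then changes by $n\epsilon>\epsilon/k$, so the instability comes purely from the number of summands, each of which shifts by $\epsilon$. You instead fix $n=d$ (a single hyperplane) and perturb a single point, exploiting a different source of sensitivity: moving one of two points at mutual distance $\epsilon$ by $\epsilon$ rotates the determined line by $45^\circ$, so a query point at distance $R$ sees its depth change by $R(1-1/\sqrt{2})$, which beats $\epsilon/k$ once $R>\epsilon/\bigl(k(1-1/\sqrt{2})\bigr)$. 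Your computation checks out (the map that is the identity except $p_2\mapsto(\epsilon,\epsilon)$ is a legitimate $\epsilon$-perturbation under the paper's definition, and the two depths are indeed $R$ and $R/\sqrt{2}$), and the sketched extension to $d\geq 3$ is routine once the perturbed points are verified to be affinely independent, as you note. What each approach buys: yours shows instability even with a fixed, minimal point set and fixed $\epsilon$, the culprits being a near-degenerate configuration together with unbounded query distance; the paper's shows instability even under a rigid shift of $P$ with $q$ close to $P$, with the growing number of hyperplane terms doing the work. One small caveat: your rotation mechanism requires $d\geq 2$, so it does not cover $d=1$, where for fixed $n$ the change in depth is bounded by about $2n\epsilon$ and a counting argument like the paper's is needed; conversely, the paper's explicit construction treats only $d=1$, so the two arguments are complementary across dimensions.
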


\begin{proof}
Choose any $k > 0$ and let $n = \max\{1, \lceil 1/k\rceil + 1\}$. Let $P$ be a set of $n$ points in $\mathbb{R}$ and let $q \in \mathbb{R}$ lie to the left of $P$. By moving all points of $P$ one unit to the right ($\epsilon = 1$), the hyperplane depth of $q$ relative to $P$ increases by a factor of $n$, regardless of $k$. Thus HDD is not $k$-stable.
\end{proof}

\begin{theorem} \label{transformation}
The HDD function $D_P(q)$ relative to the set ${P}$ in $\mathbb{R}^d$ is not equivariant under affine transformations.
\end{theorem}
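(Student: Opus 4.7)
The plan is to exhibit an explicit small configuration $P$ in $\mathbb{R}^2$ together with a non-rigid affine map $A$ such that the HDD median of $A(P)$ is not the image under $A$ of the HDD median of $P$, which rules out affine equivariance in the standard sense that $\mathrm{median}(A(P)) = A(\mathrm{median}(P))$.

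First I would observe that HDD is already invariant under the Euclidean subgroup: translations shift every hyperplane and every query point by the same vector and so preserve distances, and rotations preserve Euclidean distances between points and hyperplanes. Any failure of equivariance must therefore come from the non-rigid part of the affine group, namely shears and non-uniform scalings. Under such maps the Euclidean distance from a point to a hyperplane is rescaled by a factor that depends on how that particular hyperplane's normal is reoriented (cf.\ the $1/\|w_i\|$ weighting in Equation~\eqref{eq1}), so different terms in the sum $D_P(q)$ are scaled by different factors, and the minimizer can shift in a way that is not captured by applying $A$ itself.

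Next I would construct a minimal concrete example. Take $P = \{(0,0),(1,0),(0,1)\}$. The three lines determined by pairs of these points are $y=0$, $x=0$ and $x+y=1$, and by Theorem~\ref{medianOnIntersections} the HDD median must lie at one of their pairwise intersections, which here coincide with the three vertices. Short distance calculations give $D_P(0,0) = 1/\sqrt{2}$ and $D_P(1,0) = D_P(0,1) = 1$, so the HDD median of $P$ is at $(0,0)$. Now apply the unit-determinant shear $A(x,y) = (x+y,\,y)$, which fixes the origin. The sheared set $A(P) = \{(0,0),(1,0),(1,1)\}$ is again a right triangle but with the right angle at $(1,0)$; repeating the computation gives $D_{A(P)}(0,0) = 1$, $D_{A(P)}(1,0) = 1/\sqrt{2}$, and $D_{A(P)}(1,1) = 1$, so the HDD median of $A(P)$ is at $(1,0) \neq A(0,0)$. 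This single discrepancy witnesses non-equivariance.

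The main obstacle is choosing the example wisely, not the verification itself: Theorem~\ref{simmetry} pins the median at the center of symmetry for any symmetric $P$, so a symmetric configuration would hide the effect under the natural candidates of scalings and shears. Picking the asymmetric right triangle above sidesteps this issue, and after that the argument reduces to six elementary distance computations. If desired, a parallel sanity check with a pure non-uniform scaling could be noted to emphasize that it is the non-orthogonal part of the linear map, combined with asymmetry of $P$, that drives the failure.
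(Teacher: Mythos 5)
Your proposal is correct and follows essentially the same route as the paper: an explicit three-point counterexample in $\mathbb{R}^2$, using Theorem~\ref{medianOnIntersections} to restrict the median to the triangle's vertices and then comparing depths before and after a non-rigid affine map. The only cosmetic difference is that you use a shear of the triangle $\{(0,0),(1,0),(0,1)\}$ while the paper uses a non-uniform scaling of $\{(0,0),(4,0),(2,1)\}$; your arithmetic checks out.
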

\begin{proof}
We will prove this theorem using a counter-example. Consider the set of points $P=\{p_0(0,0), p_1(4,0), p_2(2,1)\}$. Using Theorem~\ref{medianOnIntersections} we can show that the median is on point $p_2(2,1)$. Now consider the set $P'=\{p'_0(0,0), p'_1(4,0), p'_2(2,5)\}$ that is $P$ under the non-uniform affine transformation matrix $
\begin{bmatrix}
1 & 0 \\
0 & 5
\end{bmatrix}
$.
Using theorem \ref{medianOnIntersections} and \ref{simmetry} It can be shown that the median is on the line $p'_0p'_1$ now. This means that the HDD median is not equivariant under affine transformation. 
\end{proof}

As we now show, HDD is equivariant under similarity transformations, including translation, rotation, reflection and uniform scaling, since these preserve the shape of $P$.

\begin{theorem} \label{similarity}
The HDD function $D_P(q)$ relative to the set ${P}$ in $\mathbb{R}^d$ is equivariant under the similarity transformations.
\end{theorem}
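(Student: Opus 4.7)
The plan is to decompose an arbitrary similarity transformation into elementary components and track how each affects point-to-hyperplane distance. Any similarity transformation $T:\mathbb{R}^d \to \mathbb{R}^d$ can be written as $T(x) = s R x + t$, where $s > 0$ is a uniform scale factor, $R$ is an orthogonal matrix (capturing rotation and possible reflection), and $t \in \mathbb{R}^d$ is a translation vector. Since $R$ is orthogonal and the translation cancels in differences, $T$ satisfies $\|T(x) - T(y)\| = s\|x - y\|$ for all $x,y \in \mathbb{R}^d$.

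First I would observe that $T$ is an affine bijection, so it maps any $(d-1)$-dimensional affine subspace to another $(d-1)$-dimensional affine subspace. In particular, the hyperplane determined by $d$ points $p_{i_1},\ldots,p_{i_d} \in P$ maps to the hyperplane determined by $T(p_{i_1}),\ldots,T(p_{i_d}) \in T(P)$. This yields a natural bijection $h \mapsto T(h)$ between $H_P$ and $H_{T(P)}$, so the summation in Definition~\ref{Hyperplane distance depth} ranges over a correspondingly identical set of terms after transformation.

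Next I would verify that Euclidean point-to-hyperplane distance scales by the factor $s$ under $T$. For a hyperplane $h$ with foot of perpendicular $f_h(q)$ from $q$, the point $T(f_h(q))$ lies on $T(h)$ and the segment from $T(q)$ to $T(f_h(q))$ is still orthogonal to $T(h)$ because $R$ preserves angles. Hence
\begin{equation}
\dist(T(q), T(h)) = \|T(q) - T(f_h(q))\| = s \cdot \|q - f_h(q)\| = s \cdot \dist(q,h).
\end{equation}
Summing over all $h \in H_P$ and using the bijection above gives $D_{T(P)}(T(q)) = s \cdot D_P(q)$. Because this holds for every $q \in \mathbb{R}^d$ and $s > 0$ is a fixed positive constant, the arg-min is preserved: if $q^*$ minimizes $D_P$, then $T(q^*)$ minimizes $D_{T(P)}$, establishing equivariance of the median under $T$. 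When $s = 1$ (i.e.\ $T$ is an isometry), the depth value itself is preserved. The only subtle point is confirming that the perpendicular condition is preserved by $T$, which follows directly from the orthogonality of $R$; no real obstacle arises.
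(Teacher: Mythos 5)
Your proposal is correct and follows essentially the same route as the paper: show that similarity transformations scale point-to-hyperplane distances by the uniform factor $s$ (with $s=1$ for isometries), so $D_{T(P)}(T(q)) = s\,D_P(q)$ and the minimizer is preserved. You simply make explicit two details the paper glosses over --- the bijection $h \mapsto T(h)$ between $H_P$ and $H_{T(P)}$ and the preservation of the perpendicular foot --- which strengthens rather than changes the argument.
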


\begin{proof}
For any rotation, reflection, or translation transformation $f$, the distance from the query point $q$ to any hyperplane $h_i$ remains unchanged. That is, for any point $q$ and any hyperplane $h_i$, $\dist(q,h_i) = \dist(f(q),f(h_i))$.

For any uniform scaling transformation $f$ with a scaling factor of $k$, distances between each pair of points will be multiplied by $k$ after the transformation. Therefore it is easy to show that, for any query point $q$, the HDD will be multiplied by $k$ after uniform transformation. Therefore, the median is equivariant under the uniform scaling transformation.
\end{proof}

\section{Algorithms} \label{Algorithms}
In this section, we provide three algorithms: a) to compute HDD depth queries in $O(d\log n)$ time after $O(n^{2d^2+2d})$ preprocessing time,  b) to find an HDD median point in $O(dn^{d^2}\log n)$ time, and c)  to find an approximate HDD median. Let $P$ be a set of $n$ points in $\mathbb{R}^d$, and let $H_P$ be the set of $\binom{n}{d}$ hyperplanes determined by $d$ point in $P$.

\subsection{HDD Query Algorithm}
\label{sec:HDDAlg}
The hyperplane distance depth of a query point $q$ relative to $P$ can be computed by directly evaluating Equation~\eqref{eq1} in $O\left(\binom{n}{d}\right) = O(n^d)$ time. We will present an algorithm that can calculate HDD in logarithmic time after preprocessing. 
First, to measure the HDD of $q$, we need to store some coefficients belonging to each polytope formed by hyperplanes in $H_P$.

Consider Equation~\eqref{eq2}. Let $S_P$ be the set of all minimal polytopes determined by the arrangement of
hyperplanes in $H_P$. For a query point $q_{k}$ in a polytope $s_k \in S_P$, the coefficients $g_{i,q}$ for $h_i \in H_P$ are the same. Therefore, for any points $q_{k}$ in $s_k$, we can simplify the summation in \eqref{eq2} in $O(n^d)$ time and find the $2$ coefficients $a_k$ and $b_k$ such that
\begin{equation}
\label{eqn:summation}
D_P(q_k) =\sum_{h_i \in H_P} g_{i,q_k} \frac{{w_i}.q_k+{b_i}}{\|{w_i}\|}= a_kq_k+b_k
\end{equation}
Using Euler's characteristic theorem we know that there are $O(n^{d^2})$ polytopes formed by the hyperplanes in $H$ e.g. in Figure~\ref{fig:4PointsExample} there are $18$ polytopes (faces) formed by the $6$ hyperplanes (lines). Therefore we will need $O(2n^{d^2}) \in O(n^{d^2})$ space and $O(n^{d^2}n^d) \in O(n^{d^2+d})$ time to preprocess.

Using the mentioned data structure we can calculate the HDD measure in $O(1)$ time if we know to which polytope the query point belongs.

Given $n$ hyperplanes in $d$-dimensional space and a query point $q$, it takes $O(\log n)$ time to find the $q$ location with a data structure of size $O(n^d)$ and a preprocessing time of $O(n^{2d+2})$\cite{CHAZELLE199453}. In our problem, there are $\binom{n}{d} \in O(n^d)$ hyperplanes. Therefore, with a preprocessing time of ${\binom{n}{d}}^{2d+2} \in O(n^{2d^2+2d})$ and a space of $O(n^{d^2})$, we can find the location of $q$ in $O\left(\log \binom{n}{d}\right) \in O(d\log n)$ time.

Now after finding the $q$'s location in $O(d\log n)$, we can calculate the HDD measure $D_P(q_k)$ in $O(1)$ using Equation~\eqref{eqn:summation}.

Therefore, after $O(n^{d^2+d}+n^{2d^2+2d}) \subseteq O(n^{2d^2+2d})$ preprocessing time using $O(n^{d^2})$ space, we can find the HDD of an arbitrary query point in $O(d\log n)$ time.
This proves the following theorem.

\begin{theorem}\label{thm:HDDAlg}
    We can preprocess any given set $P$ of $n$ points in $\mathbb{R}^d$ in $O(n^{2d^2+2d})$ time, such that given any point $q \in \mathbb{R}^d$, we can compute $D_P(q)$ in $O(d\log n)$ time.
\end{theorem}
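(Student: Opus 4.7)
The plan is to follow the outline already sketched in the text preceding the theorem and turn it into a clean two-part argument: a preprocessing phase that builds the arrangement together with per-cell linear coefficients, followed by a point-location-based query phase.

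First I would set up the structural observation underlying Equation~\eqref{eqn:summation}. Let $\mathcal{A}$ be the arrangement of the $\binom{n}{d}\in O(n^d)$ hyperplanes in $H_P$. By Theorem~\ref{medianOnIntersections}'s setup, within any (open) full-dimensional cell $s_k$ of $\mathcal{A}$ the sign vector $(g_{i,q})_{h_i\in H_P}$ is constant; call this common value $(g_{i,k})$. Then for every $q_k\in s_k$,
\begin{equation*}
D_P(q_k)=\sum_{h_i\in H_P}g_{i,k}\frac{w_i\cdot q_k+b_i}{\|w_i\|}=a_k\cdot q_k+b_k,
\end{equation*}
where $a_k=\sum_i g_{i,k}w_i/\|w_i\|\in\mathbb{R}^d$ and $b_k=\sum_i g_{i,k}b_i/\|w_i\|\in\mathbb{R}$. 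So within each cell, $D_P$ is an explicit affine function whose coefficients can be stored in $O(d)$ space.

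Next I would bound the combinatorial complexity and preprocessing cost. The classical complexity bound for an arrangement of $m$ hyperplanes in $\mathbb{R}^d$ gives $O(m^d)$ cells; with $m=\binom{n}{d}=O(n^d)$ this yields $O(n^{d^2})$ cells in $\mathcal{A}$, consistent with the $18$-cell example in Figure~\ref{fig:4PointsExample}. Naively computing $(a_k,b_k)$ from scratch for each cell costs $O(n^d)$ summands per cell, giving total preprocessing $O(n^{d^2+d})$; this is already within the claimed budget. The per-cell coefficients are stored in the point-location structure so they can be retrieved in $O(1)$ once the containing cell is known.

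For the query I would invoke the point-location data structure of Chazelle~\cite{CHAZELLE199453}, which for an arrangement of $m$ hyperplanes in $\mathbb{R}^d$ answers point-location in $O(\log m)$ time after $O(m^{2d+2})$ preprocessing. Plugging in $m=\binom{n}{d}=O(n^d)$, the preprocessing cost becomes $O(n^{(2d+2)d})=O(n^{2d^2+2d})$ and a query takes $O(\log \binom{n}{d})=O(d\log n)$ time. Adding the $O(n^{d^2+d})$ coefficient-precomputation cost is dominated by $O(n^{2d^2+2d})$. Given a query point $q$, I locate its cell $s_k$ in $O(d\log n)$ time, retrieve $(a_k,b_k)$, and return $a_k\cdot q+b_k$; treating $d$ as a constant, the dot product is $O(1)$, so the overall query time is $O(d\log n)$, as claimed.

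The main obstacle here is essentially notational rather than substantive: one must be careful about degenerate queries that lie on one or more hyperplanes of $H_P$ (so that some $g_{i,q}=0$), but since the corresponding summand then contributes zero, assigning $q$ to any incident closed cell yields the correct value, and the point-location structure can be set up to handle closed cells consistently. Once this is dispatched, the theorem follows by combining the cell-wise linearity with the cited point-location bound.
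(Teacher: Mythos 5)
Your proposal is correct and follows essentially the same route as the paper: cell-wise affine representation of $D_P$ over the arrangement of the $O(n^d)$ hyperplanes, $O(n^{d^2+d})$ coefficient precomputation, and Chazelle's point-location structure giving $O(n^{2d^2+2d})$ preprocessing with $O(d\log n)$ queries. The only additions are minor bookkeeping (degenerate queries on hyperplanes, explicit formulas for $a_k,b_k$) that do not change the argument.
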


\subsection{Finding a HDD Median}
\label{sec:algorithms.medianExact}
By Theorem~\ref{medianOnIntersections}, 
a straightforward algorithm for finding an HDD median of $P$ is to check all points of intersection between $d$ hyperplanes in $H_P$ using an exhaustive search. There are $\binom{n}{d}$ hyperplanes in $H_P$ and therefore $\binom{\binom{n}{d}}{d} \in O(n^{d^2})$ intersection points between hyperplanes in $H_P$. Since it takes $O(n^d)$ to compute the Equation~\eqref{eq1} directly, a HDD median of $P$ can be found in $O(n^{d^2+d})$ time by this brute-force algorithm.

Next, we will introduce an algorithm that finds the HHD median in $O(dn^{d^2} \log n)$ time. 
When $d=2$, this second algorithm runs in
$O(n^4\log n)$ time, compared to $O(n^6)$ time for the brute-force algorithm. 
First, we will show that we can find the point with the smallest HDD on a line in $O(d n^d \log n)$ time. Consider the intersection of $d-1$ hyperplanes in $H_P$ that determine a line $\ell$. Since every hyperplane in $H_P$ has exactly one point of intersection with $\ell$, $H_P \cap \ell$ is a set of $O(n^d)$ points of intersection. By Theorem~\ref{convexity}, we can conclude that the hyperplane depth of points on $\ell$ is a convex function. Since $H_P \cap \ell$ is discrete, using binary search and calculating HDD in $O(n^d)$ time using Equation~\eqref{eq1}, we can find the intersection point with the minimum HDD in $O(n^d \log (n^d)) = O(d n^d \log n)$ time.
 
 We can use the algorithm above to find the minimum point for each intersection line among hyperplanes in $H_P$ to find an HDD median. Since each $d-1$ hyperplanes in $H_P$ form a line, there are $\binom{\binom{n}{d}}{d-1} \in O(n^{d^2-d})$ lines and thus we can find the median in $O(d n^{d^2} \log n)$ time.
This proves the following theorem.

\begin{theorem}\label{thm:medianExact}
    Given a set $P$ of $n$ points in $\mathbb{R}^d$, we can find an HDD median of $P$ in $O(dn^{d^2}\log n)$ time.
\end{theorem}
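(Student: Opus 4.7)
My plan is to exploit the localization already provided by Theorem~\ref{medianOnIntersections}: every HDD median lies at an intersection of $d$ hyperplanes from $H_P$. A direct brute-force sweep over these $O(n^{d^2})$ intersection points, using Equation~\eqref{eq1} to evaluate $D_P$ in $O(n^d)$ time per point, yields only $O(n^{d^2+d})$, which is a factor of roughly $n^d/\log n$ too slow. To shave this factor, I would reparametrize the search by lines instead of points: each candidate median lies on at least one of the lines obtained by intersecting $d-1$ hyperplanes of $H_P$, and on such a line $\ell$ it is one of the $O(n^d)$ points of $\ell \cap H_P$.

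Fix such a line $\ell$. By Theorem~\ref{convexity}, $D_P$ is convex on $\mathbb{R}^d$, so its restriction to $\ell$ is a convex piecewise-linear univariate function whose breakpoints are exactly the $O(n^d)$ points where $\ell$ crosses a hyperplane of $H_P$. Sort these breakpoints along $\ell$ and binary-search for the minimizer: at each probe, evaluate $D_P$ at the probed breakpoint (and at an immediate neighbor, to detect the direction of descent) in $O(n^d)$ time via Equation~\eqref{eq1}. This locates the minimum of $D_P|_\ell$ among the candidate intersection points in $O(n^d \log n^d) = O(d\, n^d \log n)$ time.

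Finally, enumerate all $\binom{\binom{n}{d}}{d-1} \in O(n^{d(d-1)}) = O(n^{d^2-d})$ choices of $d-1$ hyperplanes from $H_P$, form the line they define, run the per-line subroutine, and return the best point found across all lines. The total time is $O(n^{d^2-d}) \cdot O(d\, n^d \log n) = O(d\, n^{d^2}\log n)$, matching the claimed bound. Correctness follows from Theorem~\ref{medianOnIntersections}: the true HDD median is an intersection of some $d$ hyperplanes, hence lies on at least one of the enumerated lines as one of its breakpoints, and the binary search certifies the per-line optimum.

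The main obstacle I foresee is making the ``binary search on a convex piecewise-linear function'' step fully rigorous. The restriction $D_P|_\ell$ can be locally flat, and degenerate configurations (several hyperplanes of $H_P$ meeting $\ell$ at the same breakpoint, or more than $d-1$ hyperplanes sharing a common $1$-flat so that the same line is enumerated multiple times) must be handled so that comparisons between adjacent breakpoints always reveal whether the minimum lies to the left, to the right, or at the probed point. None of these degeneracies change the asymptotic counts, but they are the one spot where the proof needs slightly more care than just combining Theorem~\ref{convexity} with Theorem~\ref{medianOnIntersections} and a counting argument; a standard convex ternary-search variant on the sorted breakpoints of $\ell$ suffices.
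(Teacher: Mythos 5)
Your proposal matches the paper's proof essentially step for step: localize the median via Theorem~\ref{medianOnIntersections}, enumerate the $O(n^{d^2-d})$ lines determined by $d-1$ hyperplanes of $H_P$, and on each line use convexity (Theorem~\ref{convexity}) to binary-search the $O(n^d)$ intersection points with $O(n^d)$-time evaluations, giving $O(d\,n^{d^2}\log n)$ overall. Your added remarks about flat stretches and degenerate coincident breakpoints are a point of care the paper glosses over, but they do not change the approach or the bound.
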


\subsection{Finding an Approximate HDD Median in \boldmath$\mathbb{R}^2$}
\label{sec:algorithms.medianApproximate2D}
In this section, we will present an approximation algorithm to find an HDD median of $P$ with an error of $\frac{a\sqrt{2}}{2^{\frac{m}{2}+1}}$ in $O(mn^2\log n)$ time, for any fixed $m \in \mathbb{Z}^+$, where $a$ is the diameter of $P$. 

\begin{theorem}\label{thm:medianApproximate2D}
    Given a set $P$ of $n$ points in $\mathbb{R}^2$, in $O(mn^2\log n)$ time we can find a point $x'$ in $\mathbb{R}^2$ such that $\dist(x',x) \leq \frac{a\sqrt{2}}{2^{\frac{m}{2}+1}}$, for any fixed $m \in \mathbb{Z}^+$, where $x$ denotes an HDD median of $P$ and $a = \max_{p,q\in P}\dist(p,q)$. 
\end{theorem}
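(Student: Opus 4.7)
The plan is to maintain an axis-aligned rectangle $R_i$ guaranteed to contain the HDD median $x$, shrinking it geometrically via axis-parallel bisection. By Theorem~\ref{MedianInCH}, $x$ lies in the convex hull of $P$; since $P$ has diameter $a$, we take $R_0$ to be an axis-aligned bounding square of $P$ of side $a$. For $i=1,\ldots,m$ we bisect $R_{i-1}$ by an axis-parallel line $\ell_i$, alternating vertical and horizontal splits, decide which half of $R_{i-1}$ contains $x$, and set $R_i$ to that half. When $m$ is even (the main case), $R_m$ is a square of side $a/2^{m/2}$, so returning its centre $x'$ yields $\dist(x',x)\le \frac{a\sqrt{2}}{2^{m/2+1}}$, the distance from the centre to a corner.

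To perform each bisection in $O(n^2\log n)$ time, I use a line search analogous to the one in Section~\ref{sec:algorithms.medianExact}. Restricted to $\ell_i$, the function $D_P$ is a sum of $\binom{n}{2}$ piecewise-linear convex functions (one absolute-value term per line of $H_P$), hence convex and piecewise linear with $O(n^2)$ breakpoints at the intersections of $\ell_i$ with the lines of $H_P$. Sorting these breakpoints and binary-searching over them, with each probe of $D_P$ costing $O(n^2)$ via Equation~\eqref{eq1}, locates the minimiser $p_i$ of $D_P|_{\ell_i}$. To choose the side of $\ell_i$ containing $x$, I use convexity (Theorem~\ref{convexity}) via the subdifferential $\partial D_P(p_i)$: this is a zonotope in $\mathbb{R}^2$ obtained by adding, to the constant vector $c=\sum_{h_j\in H_P\,:\,p_i\notin h_j}\pm w_j/\|w_j\|$, a segment $[-w_j/\|w_j\|,w_j/\|w_j\|]$ for each line $h_j\in H_P$ active at $p_i$. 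Since $p_i$ minimises $D_P$ on $\ell_i$, the intersection of $\partial D_P(p_i)$ with the line of subgradients normal to $\ell_i$ is a non-empty interval $[\alpha_-,\alpha_+]$; the subgradient inequality then forces $x$ onto the positive side of $\ell_i$ when $\alpha_+<0$, onto the negative side when $\alpha_->0$, and, when $\alpha_-\le 0\le \alpha_+$, identifies $p_i$ itself as a global minimiser (so we return $p_i$).

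Correctness follows by induction on $i$: $R_0\ni x$, and each bisection preserves this invariant by the subgradient argument above, so $x\in R_m$ and $\dist(x',x)$ is bounded by the half-diagonal of $R_m$. Summing per-iteration costs over $m$ iterations yields the claimed $O(mn^2\log n)$ runtime. The main obstacle is the zonotope-slicing step used to compute $[\alpha_-,\alpha_+]$: for generic $p_i$ lying on a single line of $H_P$ this is immediate, but when $p_i$ lies simultaneously on several lines of $H_P$---so that $\partial D_P(p_i)$ has positive dimension with up to $O(n^2)$ generating segments---assembling and slicing the resulting zonotope requires care, and can be handled in $O(n^2\log n)$ time per iteration by angular-sorting the active-line generators and sweeping; this also keeps the per-iteration cost within the $O(n^2\log n)$ budget claimed.
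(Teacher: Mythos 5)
Your proposal is correct and follows the paper's overall prune-and-search framework (bounding square of side $a$ via Theorem~\ref{MedianInCH}, axis-parallel cuts, minimizing $D_P$ on each cut line by sorting its $O(n^2)$ intersections with $H_P$ and binary searching with $O(n^2)$-time probes, discarding a half-plane per cut, returning the centre of the final cell), but the key side-determination step is genuinely different. The paper decides combinatorially: at the restricted minimizer it takes the line $h_{\min}\in H_P$ through it, evaluates $D_P$ at the two neighbouring $H_P$-intersection points $I_u,I_d$ along $h_{\min}$ on either side of the cut, and eliminates the side containing the worse one via a convexity contradiction (Theorem~\ref{convexity}). You instead argue via the subdifferential at $p_i$: $0$ in the sliced interval certifies a global median, and otherwise the subgradient inequality applied to $\alpha_+\nu$ or $\alpha_-\nu$ shows every point strictly in one open half-plane has depth exceeding $D_P(p_i)$, so all medians lie in the other closed half. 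This buys a clean treatment of degenerate restricted minimizers (many lines of $H_P$ through $p_i$, e.g.\ at an input point), which the paper's neighbour comparison glosses over; and your ``requires care'' step is indeed within budget, since computing $[\alpha_-,\alpha_+]$ is just maximizing/minimizing the normal component of $c+\sum_j\lambda_j w_j/\|w_j\|$ over $\lambda_j\in[-1,1]$ with the along-$\ell_i$ component forced to zero, a continuous-knapsack LP solvable greedily after one sort in $O(n^2\log n)$, with no explicit zonotope construction needed. One small mismatch: for odd $m$ your alternating single cuts leave an $\frac{a}{2^{(m-1)/2}}\times\frac{a}{2^{(m+1)/2}}$ rectangle whose half-diagonal is $\frac{a\sqrt{5/2}}{2^{m/2+1}}$, slightly above the stated bound; the paper performs both cuts (quartering) in each of its $m$ steps and in fact obtains $\frac{a\sqrt{2}}{2^{m+1}}$, so either do two cuts per iteration or restrict the stated bound to even $m$.
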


\begin{proof}
Let $l_a$ be an arbitrary line among the lines in $H_P$ (see Figure~\ref{fig:eliminate2d}). There are $n \choose 2$ lines in $H_P$ and, consequently, $O(n^2)$ points of intersection between $l_a$ and lines in $H_P$. Using an analogous argument as in the proof of Theorem~\ref{medianOnIntersections}, the point with minimum HDD on $l_a$ lies at an intersection of $l_a$ and a line in $H_P$. Therefore, using the same algorithm described in Section~\ref{sec:algorithms.medianExact}, we can find the point $i_{\min}$ on $l_a$ with minimum HDD in $O(n^2\log n)$ time; let $h_{\min}$ denote the line in $H_P$ such that $i_{\min} = h_{\min} \cap l_a$. Next, we find the closest points of intersection in $H_P$ to $i_{\min}$ on the line $h_{\min}$ in each direction, say $I_u$ and $I_d$. We compute the HDD for all the three points $i_{\min}$, $I_d$, and $I_u$. Since $i_{\min}$ has minimum HDD on the line $l_a$, if $D_P(i_{\min}) < \min\{D_P(I_u), D_P(I_d) \}$, then $i_{\min}$ is the HDD median (by Theorem~\ref{convexity}). By Theorem~\ref{convexity} again, $D_P(i_{\min}) < D_P(I_u)$ or $D_P(i_{\min}) < D_P(I_d)$. Furthermore, $D_P(i_{\min}) > D_P(I_u)$ or $D_P(i_{\min}) > D_P(I_d)$. Without loss of generality, suppose $D_P(I_d)<D_P(i_{\min})<D_P(I_u)$. We claim that all points in the half-plane bounded by $h_{\min}$ that contains $I_u$ have HDD that exceeds $D_P(i_{\min})$; we prove this by contradiction. Suppose there exists a point $A$ in this half-plane such that $D_P(A)<D_P(i_{\min})$. Let $B$ be the intersection point of the line $l_a$ and the line segment $\overline{AI_d}$. Since $i_{\min}$ has minimum HDD on the line $l_a$, therefore, $D_P(i_{\min})<D_P(B)$. Furthermore, $D_P(A)<D_P(B)$. On the other hand, we assumed $D_P(I_d)<D_P(i_{\min})<D_P(I_u)$ and we know $D_P(i_{\min})<D_P(B)$. Consequently, $D_P(I_d)<D_P(B)$. Combining the two resulting inequalities above, we have $D_P(A)<D_P(B)$ and $D_P(I_d)<D_P(B)$, which is impossible since the three points are on the same line and the HDD function is convex. Therefore, no such point $A$ can exist. 

\begin{figure}[htp]
    \centering
    \includegraphics[width=8cm]{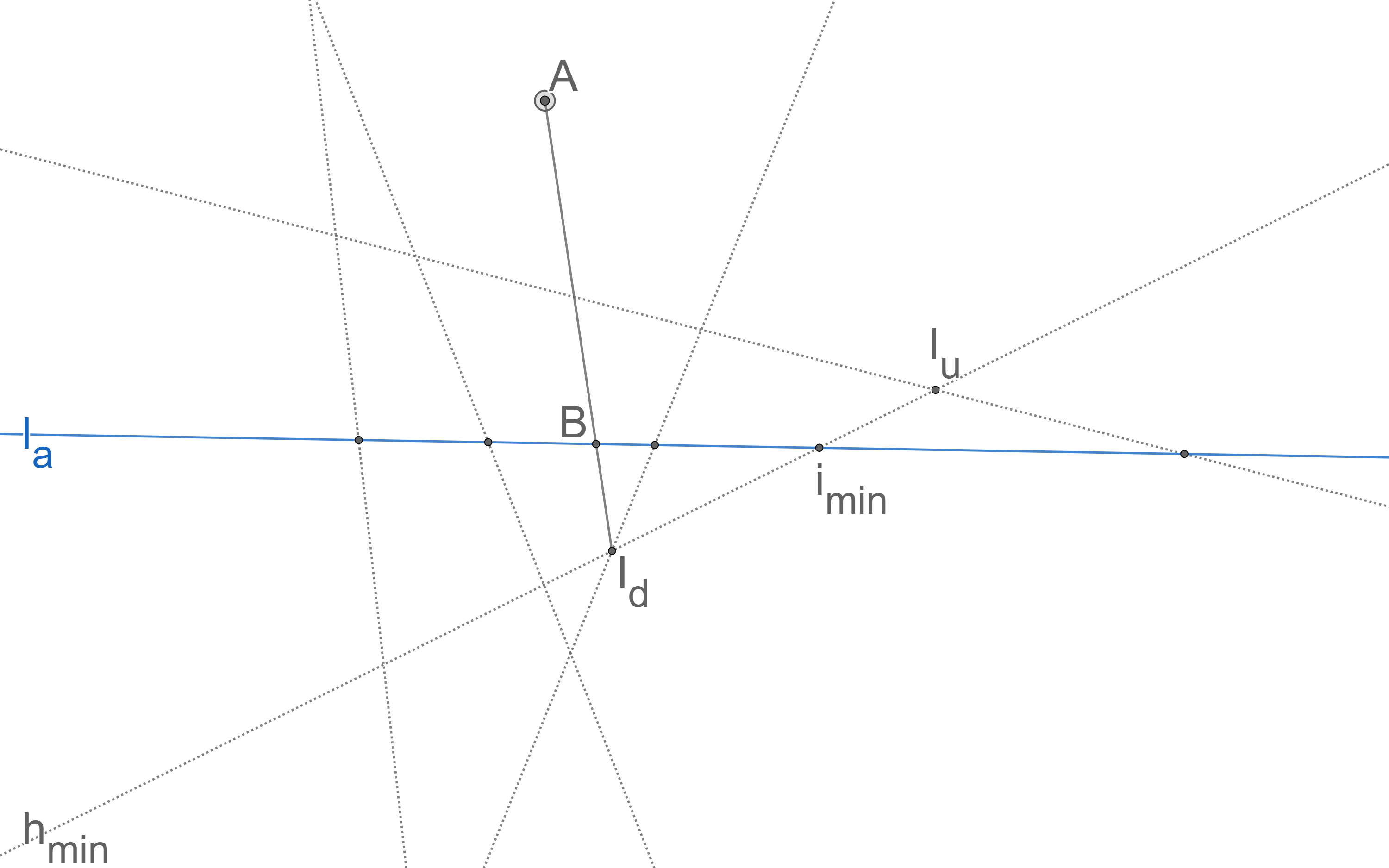}
    \caption{An algorithm to eliminate the points belonging to a half-space. The blue line $l_a$ is an arbitrary line dividing the space into 2 halves. The dotted gray lines are the lines in $H_P$.}
    \label{fig:eliminate2d}
\end{figure}
Therefore, no point of intersection in $H_P$ in this half-plane can be an HDD median of $P$; in $O(n^2\log n + 3n^2) \subseteq O(n^2\log n)$ time we can remove these points from consideration in our search for a median. 

Now we will use this property to approximate the median point. Firstly, we will find the diameter $a$ of the input points in $O(n)$ time and consider an $a \times a$ square that contains $P$ (see Figure~\ref{fig:square}). By Theorem~\ref{MedianInCH}, we know that the median lies inside this square. At each step, we draw the two lines $ON$ and $OM$ that partition the square into four similar smaller squares, each with dimensions $\frac{a}{2} \times \frac{a}{2}$, and we apply the above algorithm to eliminate two half-planes in $O(n^d \log n)$ time. After $m$ steps we have a square of dimensions $\frac{a}{2^m} \times \frac{a}{2^m}$ and we return its center as an approximation of the HDD median. Since the HDD median is a point inside this square, the error is at most $\frac{a\sqrt{2}}{2^{m+1}}$. The total time complexity of the algorithm is $O(mn^2\log n)$.
\end{proof}
\begin{figure}[htp]
    \centering
    \includegraphics[width=8cm]{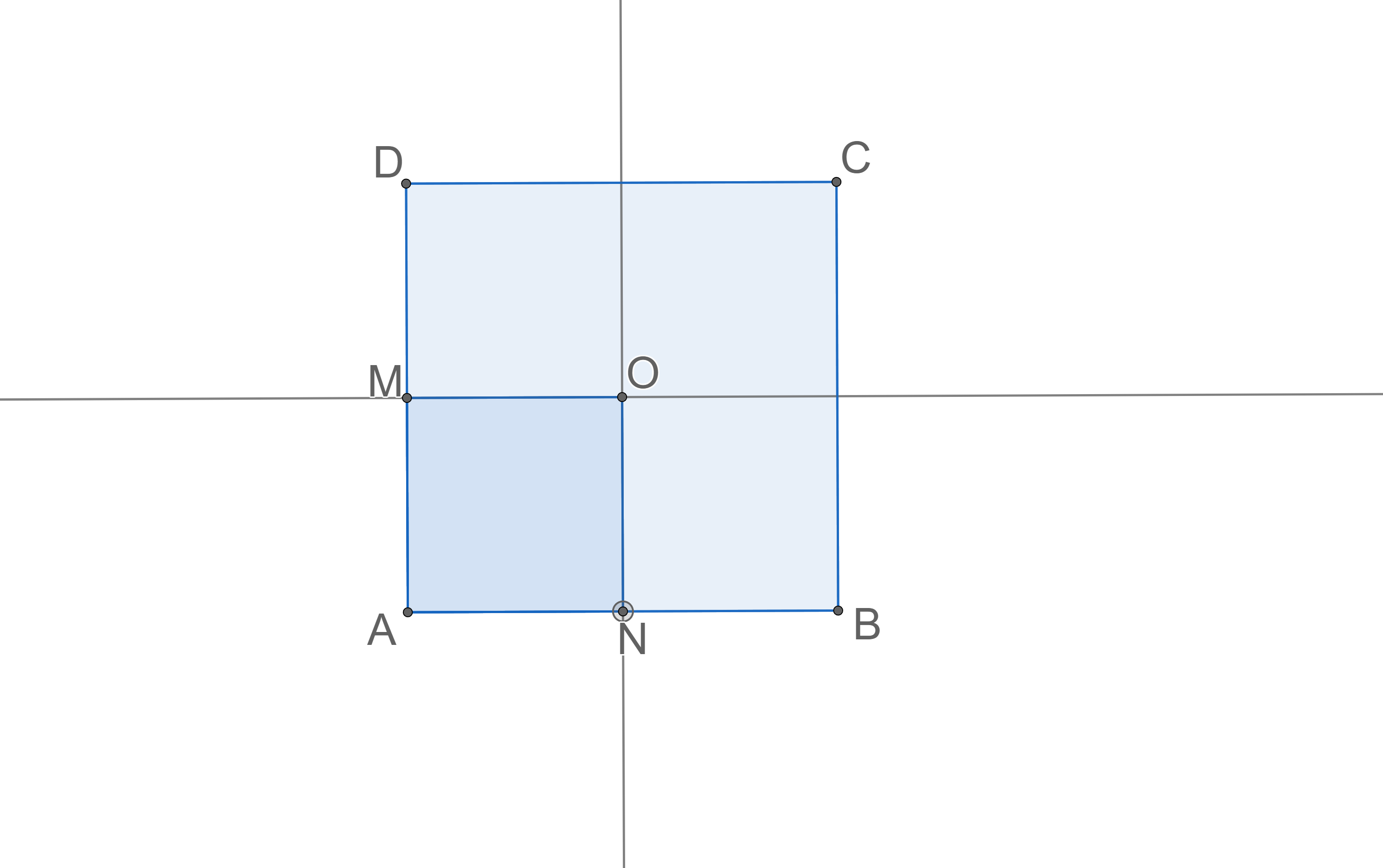}
    \caption{Illustration in support of Theorem~\ref{thm:medianApproximate2D}}
    \label{fig:square}
\end{figure}

This strategy can be generalized to higher dimensions by finding the minimum HDD on an arbitrary hyperplane $h_a$ (analogous to the line $l_a$ in the proof of Theorem~\ref{thm:medianApproximate2D}) to eliminate a half-space, but the time complexity of finding the minimum HDD point on $h_a$ is high.

\section{Discussion and Possible Directions for Future Research}
\label{sec:conclusion}
Our algorithm for computing HDD queries presented in Section~\ref{sec:HDDAlg} requires $O(n^{d^2})$ space and $O(n^{2d^2+2d})$ preprocessing time. One natural possible direction for future research is to identify algorithms with improved preprocessing time or space.

Our algorithm for finding an HDD median presented in Section~\ref{sec:algorithms.medianExact} requires $O(d n^{d^2} \log n)$ time. In addition to seeking to identify lower bounds on the worst-case running time required to find an HDD median, we could attempt to reduce the running time using techniques such as gradient descent or linear programming.

Our analysis of our algorithm for finding an approximate HDD median presented in Section~\ref{sec:algorithms.medianApproximate2D} does not capitalize on the fact that the number of candidate points decreases on each step; we charge $O(n^2\log n)$ time per step. If it could be shown that a constant fraction of the remaining points are eliminated on each step, then the bound on the algorithm's time complexity would be significantly improved. 

Finally, we could consider alternative definitions for depth using similar notions to those in Definition~\ref{hddDefinition}. E.g., one can define a ``line distance depth'' that evaluates the distances to all possible lines passing through each pair of points in the set of input points. This definition coincides with Definition~\ref{hddDefinition} when $d \leq 2$, but differs in higher dimensions, for $d \geq 3$.


\small
\bibliographystyle{abbrv}

\bibliography{refs}

\begin{thebibliography}{10}

\bibitem{bajaj1988}
C.~Bajaj.
\newblock The algebraic degree of geometric optimization problems.
\newblock {\em Discrete and Computational Geometry}, 3:177--191, 1988.

\bibitem{barnett1976ordering}
V.~Barnett.
\newblock The ordering of multivariate data.
\newblock {\em Journal of the Royal Statistical Society: Series A (General)}, 139(3):318--344, 1976.

\bibitem{CHAZELLE199453}
B.~Chazelle and J.~Friedman.
\newblock Point location among hyperplanes and unidirectional ray-shooting.
\newblock {\em Computational Geometry}, 4(2):53--62, 1994.

\bibitem{durier1985geometrical}
R.~Durier and C.~Michelot.
\newblock Geometrical properties of the {Fermat-Weber} problem.
\newblock {\em European Journal of Operational Research}, 20(3):332--343, 1985.

\bibitem{durocher2009}
S.~Durocher and D.~Kirkpatrick.
\newblock The projection median of a set of points.
\newblock {\em Computational Geometry: Theory and Applications}, 42(5):364--375, 2009.

\bibitem{durocher2017}
S.~Durocher, A.~Leblanc, and M.~Skala.
\newblock The projection median as a weighted average.
\newblock {\em Journal of Computational Geometry}, 8(1):78--104, 2017.

\bibitem{farahani2010multiple}
R.~Z. Farahani, M.~SteadieSeifi, and N.~Asgari.
\newblock Multiple criteria facility location problems: A survey.
\newblock {\em Applied mathematical modelling}, 34(7):1689--1709, 2010.

\bibitem{liu1990notion}
R.~Y. Liu.
\newblock On a notion of data depth based on random simplices.
\newblock {\em The Annals of Statistics}, pages 405--414, 1990.

\bibitem{mahalanobis2018generalized}
P.~C. Mahalanobis.
\newblock On the generalized distance in statistics.
\newblock {\em Sankhy{\=a}: The Indian Journal of Statistics, Series A (2008-)}, 80:S1--S7, 2018.

\bibitem{murray1998cluster}
A.~T. Murray and V.~Estivill-Castro.
\newblock Cluster discovery techniques for exploratory spatial data analysis.
\newblock {\em International journal of geographical information science}, 12(5):431--443, 1998.

\bibitem{oja1983descriptive}
H.~Oja.
\newblock Descriptive statistics for multivariate distributions.
\newblock {\em Statistics \& Probability Letters}, 1(6):327--332, 1983.

\bibitem{tukey1975}
J.~Tukey.
\newblock Mathematics and the picturing of data.
\newblock In {\em Proc. Int. Cong. Math.}, pages 523--531, 1957.

\bibitem{vardi2000multivariate}
Y.~Vardi and C.-H. Zhang.
\newblock The multivariate $l_1$-median and associated data depth.
\newblock {\em Proceedings of the National Academy of Sciences}, 97(4):1423--1426, 2000.

\bibitem{zuo2000general}
Y.~Zuo and R.~Serfling.
\newblock General notions of statistical depth function.
\newblock {\em Annals of statistics}, 28(2):461--482, 2000.

\end{thebibliography}


\end{document}